\tikzset{
  treenode/.style = {align=center, inner sep=2pt, text centered,
    font=\sffamily},
  arn_r/.style = {treenode, circle, black, font=\sffamily\bfseries, draw=black,
    text width=1.5em},
    arn_t/.style = {treenode, circle, black, thick, double, font=\sffamily\bfseries, draw=black,
    text width=1.5em},
  every edge/.append style={anchor=south,auto=falseanchor=south,auto=false,font=3.5 em},
}
\def\dd{\mathinner{.\,.}}
\newcommand{\cO}{\mathcal{O}}
 \newcommand{\defproblem}[3]{
  \vspace{2mm}
\noindent\fbox{
  \begin{minipage}{0.96\textwidth}
  #1\\
  {\bf{Input:}} #2  \\ 
3

  {\bf{Output:}} #3
  \end{minipage}
  }
  \vspace{2mm}
}
\begin{document}

\title{On the cyclic regularities of strings}
%
%
\author{ Oluwole Ajala\inst{1}, Miznah Alshammary\inst{1}, Mai Alzamel\inst{1}, Jia Gao\inst{1},  Costas Iliopoulos\inst{1}, Jakub Radoszewski\inst{2}, Wojciech Rytter\inst{2} and Bruce Watson\inst{3}  }

\institute{
Faculty of Natural and Mathematical Sciences, King's College London, United Kingdom\\
\email{\{oluwole.ajala, miznah.alshammary, mai.alzamel, jia.gao, c.iliopoulos\}@kcl.ac.uk}
\and
Faculty of Mathematics, Informatics and Mechanics, University of Warsaw, Poland\\
\email{\{jrad, rytter\}@mimuw.edu.pl}
\and
Faculty of Informatics Science, Stellenbosch University, South Africa\\
\email{\{bwwatson\}@sun.ac.za}
}
\maketitle              
%
\begin{abstract}
Regularities in strings are often related to periods and covers, which have extensively been studied, and algorithms for their efficient computation have broad application. In this paper we concentrate on computing cyclic regularities of strings, in particular, we propose several efficient algorithms for computing:
(i) cyclic periodicity; (ii) all cyclic periodicity; (iii) maximal local cyclic periodicity; (iv) cyclic covers. 
\keywords{Cyclic regularities, Periods, Covers}
\end{abstract}
%
%
\section{Introduction and Related Work}

A fundamental concept of repeating patterns or {\em regularities} is that of periods (also known as powers). A period of order $k$ is  defined by a concatenation of $k$ identical blocks of symbols.
The study of periods can be traced to as far back as the early 1900s with the work of~\cite{thue1906uber}, who researched a set of strings that do not contain any substrings that are periods. Periods in diverse forms gained prominence, when they became key structures in computational biology, where they are associated with various regulatory mechanisms and play an important role in genomic fingerprinting~\cite{kolpakov2003mreps}.

In regularities in strings, one of the most general notions is related to period or power, for instance, given a string $x$ of length $n$, a period $k$ 
of a string $x$ is a sub string of $x$, if it can be decomposed into 
equal-length blocks of symbols, such that $x$= $u^k u'$, where $u'$ is a prefix of $u$. However, for simplicity we will discard $u'$ and only consider $u^k$.


So far, regularities in strings related with periods and powers, which have been extensively studied,~\cite{defant2016anti}~\cite{erdHos1973anti}~\cite{fujita2010rainbow}~\cite{narayanan2017functions} and algorithms for their efficient computation have broad applications. In this paper, we study cyclic factors of strings. The motivation of cyclic factors comes from viruses. The viruses are circular strings, for example Escherichia coli (E.coli) has 154 bases and it is circular~\cite{wurpel2013chaperone}(Fig. 1). Formally, the viruses break up at any point of the circle, for example, that can appear in the DNA sequence as $x_\delta \dots x_n x_1 \dots x_{\delta-1}$ breaking up at position $\delta$ (Fig.~\ref{fig:cyclic1}).
Now, we propose several efficient algorithms for computing:
(i) cyclic periodicity; (ii) all cyclic periodicity; (iii) maximal local cyclic periodicity; (iv) cyclic covers. 
\begin{SCfigure}
  \centering
  \includegraphics[width=0.48\textwidth]%
    {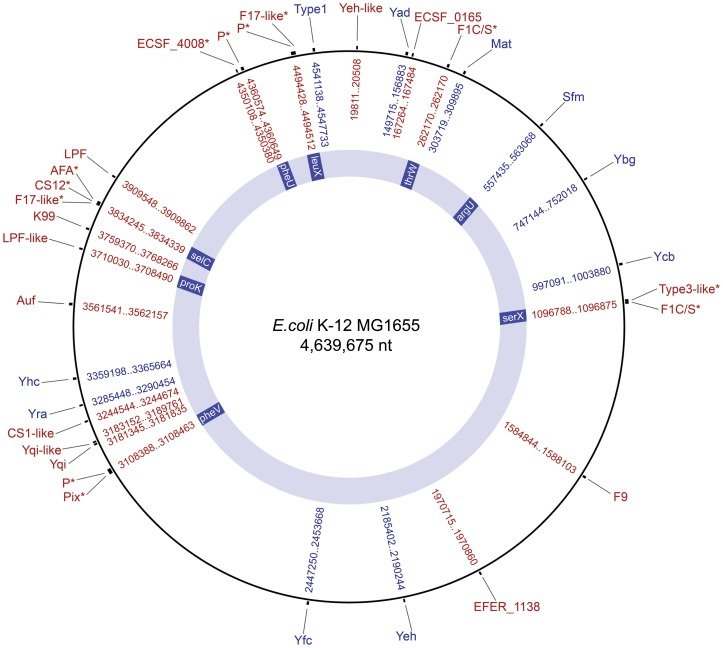}
      \caption{The E. coli K-12 MG1655 chromosome (outer black ring) was used as a reference map to visualise the locus position of 30 chromosome-borne CU fimbrial types. Types highlighted in blue are present in E. coli K-12 MG1655, types in red are absent in this strain. Fimbrial types associated with PAIs are indicated by an asterisk. A number of PAI associated fimbrial gene clusters occupy different locus positions relative to the MG1655 genome. tRNA sites that flank CU-containing PAIs are indicated on the inner blue ring.~\cite{wurpel2013chaperone}}
\end{SCfigure}


\begin{figure}[h]
	\centering
	\includegraphics[width=0.3\textwidth]{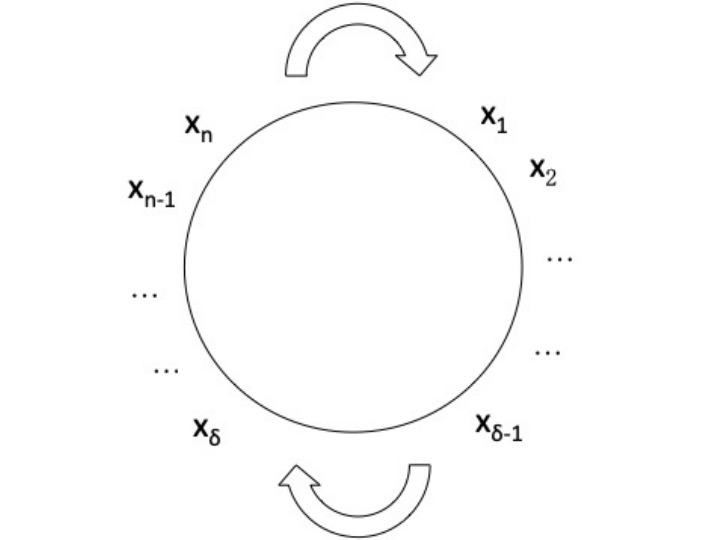}
	\caption{Circular Pattern.}
	\label{fig:cyclic1}
\end{figure}

\section{Preliminaries}

A \textit{string} $x$ of length $|x|=n$ over an alphabet $\Sigma$ of size $\sigma$ can be denoted as $x[1 \dd n]=x[1]x[2] \dots x[i] \dots x[n]$, where $1 \leq i \leq n$ and the $i$-$th$ \textit{letter} of $x$ is denoted by $x[i]$ $\in$ $\Sigma$. 
The empty string $\epsilon$ is the string of length 0. The string $x^R$ is the \textit{reverse} of string $x$.
And $x[i \dd j]$, $1 \leq i \leq j \leq n$, denotes the contiguous substring (or factor) of letters, such as $x[i]x[i+1]x[i+2] \dots x[j]$.
A substring $x[i \dd j]$ is a suffix of $x$, if $j=n$ and a substring $x[i \dd j]$ is a prefix of $x$, if $i=1$.
Given a cyclic factor $u$ of length $k$, $1 \leq k \leq n$, we denote by $c(u)$, for instance, $u=ababc$, $c(u)$ is one of the following rotations: $ababc$, $babca$, $abcab$, $bcaba$, $cabab$.
Moreover, we say $u=u_1 u_2 \dots u_n$, $c_{\delta}(u)=u_{\delta} \dots u_n u_1 \dots u_{\delta -1}$.

In this paper, suffix trees are used extensively as  computation tools. For a general introduction to suffix trees, see~\cite{CHL07},\cite{mccreight1976space},\cite{ukkonen1992constructing}, \cite{weiner1973linear}.

{\defproblem{\textsc{K-Cyclic Period}}{Given a string $x$ of length $n$, and an integer $k < n$, compute k-cyclic-period $\ell$ of $x$, where $x ={u_1}{u_2} \dots {u_\ell}$, ${u_i}={c}(u_j)$, $|{u_i}|=|{u_j}|=k$, $\forall$ $i,j$, $1 \leq i \leq \ell$, $1 \leq j \leq \ell$, and $k\times\ell=n$.}{k-cyclic-period $\ell$ of $x$}}

\begin{example} 
Consider a string $x=aaabaabaabaabaaa=:{u_1}{u_2}{u_3}{u_4}$ , where $u_1=aaab$, $u_2=aaba$, $u_3=abaa$, $u_4=baaa$ and $k=4$, $\ell=4$. Therefore $x$ has a period of length $\ell$.
\end{example}

\begin{definition}
A cyclic periodic array $A$ of a string $x$ of length $n$ is defined to be as follows: $A[i]:=\ell$, $1 \leq i \leq n$, if and only if $x[1 \dd i]$ has cyclic periodicity $\ell$ by a string $u$ and there no $u'$ , with  $|u'|\leq |u|$ that is a cyclic period of $x[1 \dd i]$.
\end{definition}

\begin{example} 
Consider a string $x=aababa$ of length $6$, a cyclic periodic array $A$ as follows:

$x[1]=a \implies A[1]:=1$ \quad \quad \quad \quad \quad $ $ $x[1 \dd 4]=aaba \implies A[4]:=1$

$x[1 \dd 2]=a$ $a \implies A[2]:=2$ \quad \quad \quad $x[1 \dd 5]=aabab \implies A[5]:=1$

$x[1 \dd 3]=aab \implies A[3]:=1$ \quad \quad \quad $x[1 \dd 6]=aab$ $aba \implies A[6]:=2$
\end{example}

\begin{definition}
We define maximal local $k$-cyclic periodicity of a string $x$ ,if a substring $y$ is cyclic periodic
and $y$ is not a substring of another cyclic periodic strings. 
\end{definition}

\begin{example}
Consider a string $x=aaaababaaab$, $\Sigma=\{a,b\}$, and a substring $y=aabababaa$ is $3$-cyclic periodic and substring $y\alpha=aabababaaa$, $\alpha \in \Sigma$,  is not cyclic periodic, and substring $\beta y=aaabababaa$, $\beta \in \Sigma$ is not cyclic periodic. Therefore, the substring $y=aabababaa$ is maximal local $3$-cyclic periodic in string $x=aaaabababaaa$.
\end{example}
\begin{definition}
We say that a string $x$ of length $n$ is cyclic-coverable by a string $u$ of length $k'$, if and only if, for every position $i$
of $x$, the following condition holds $x[\beta \dd \gamma]=c(u)$, $1 \leq \beta \leq i \leq \gamma \leq n$.
\end{definition}

\begin{example}
Consider a string $x=aababaa={u_1}{u_2}$, $u_1=aaba$, $u_2=abaa$, $k'=4$, $\gamma=2$, is cyclic coverable by a string $u$, for every position $i$ of $x$, $x[1 \dd 4]=x[4 \dd 7]=c(u)$.
\end{example}

\begin{definition}
Compute all cyclic covers of a given string $x$, that is for all possible length cyclic covers.
\end{definition}

\begin{example}
Consider a string $x=ababbaba$, then $ab$, $abab$, $ababb$, $ababbab$ are cyclic covers of $x$.
\end{example}





\section{Computing $k$-cyclic periodicity}

\begin{theorem}
Given a string $x$ of length $n$ and an integer $k$, $1 \leq k \leq n$, test whether it is $k$-cyclic periodic; this can be determined in $\cO(n/k)$ time and $\cO(n)$ space.
\end{theorem}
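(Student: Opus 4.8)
The plan is to reduce the test to a sequence of cyclic-rotation comparisons between blocks and to make each comparison run in (near) constant time using an $\cO(n)$-space index. First I would observe that if $k \nmid n$ then $k \times \ell = n$ has no integer solution $\ell$, so $x$ cannot be $k$-cyclic periodic and this case is rejected immediately. Otherwise set $\ell = n/k$ and split $x$ into the blocks $u_i = x[(i-1)k+1 \dd ik]$ for $1 \le i \le \ell$. Since being a cyclic rotation of (i.e.\ conjugacy of equal-length words) is an equivalence relation, the requirement $u_i = c(u_j)$ for all $i,j$ is equivalent to the single condition that every block $u_i$ is a cyclic rotation of the first block $u_1$. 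Hence it suffices to perform the $\ell-1$ tests ``$u_i$ is a rotation of $u_1$'' and answer \emph{yes} iff all of them succeed.

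The combinatorial fact I would use is that two strings $u,v$ with $|u|=|v|=k$ are cyclic rotations of each other iff $v$ occurs as a factor of the square $u\,u$, equivalently iff $u$ and $v$ share the same minimal (Lyndon) rotation. Consequently each individual test reduces either to a pattern-occurrence query of $u_i$ inside $u_1 u_1$, or to an equality test between two canonical rotation representatives. The whole problem then becomes: index $x$ so that the cyclic equivalence of any two length-$k$ blocks can be decided in $\cO(1)$ time, after which the verification loop issues the $\ell-1$ queries in $\cO(\ell)=\cO(n/k)$ time.

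For the index I would build the suffix tree of $x$ (augmenting with a lowest-common-ancestor structure so that longest-common-extension queries between arbitrary positions cost $\cO(1)$); this is the $\cO(n)$-space resource named in the statement. A cyclic-rotation test for a pair of blocks is then resolved with a constant number of longest-common-extension queries, aligning $u_i$ against the two halves of $u_1 u_1$ at the candidate rotation offset, or alternatively by comparing minimal-rotation representatives precomputed for each block. Either way each of the $\ell-1$ tests costs $\cO(1)$, so the verification runs in $\cO(n/k)$ time within $\cO(n)$ space.

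The step I expect to be the main obstacle is precisely this constant-time cyclic-equivalence query: a naive rotation comparison costs $\cO(k)$ per block and would inflate the running time to $\cO(\ell k)=\cO(n)$, so the entire saving hinges on the suffix-tree/longest-common-extension machinery (which is also what forces the $\cO(n)$ space). Two further technical points must be handled carefully: the divisibility case $k \nmid n$, and the fact that when $u_1$ is non-primitive the rotation offset matching $u_i$ to $u_1$ need not be unique, so the query must certify the \emph{existence} of some valid offset rather than pin down a particular one.
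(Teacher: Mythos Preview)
Your high-level plan mirrors the paper's: preprocess suffix trees of $x$ and $x^R$ with lowest-common-ancestor support so that longest-common-extension (LCE) queries cost $\cO(1)$, then verify for each of the $\ell=n/k$ blocks that it is a cyclic rotation of $u_1$, spending $\cO(1)$ per block. That is exactly the skeleton of the paper's proof, including the observation that it suffices to compare every block against the first one.

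Where the paper goes further than your proposal is precisely at the step you flag as the main obstacle --- the constant-time rotation test. The paper does \emph{not} leave the offset unspecified: for block $u_m$ starting at position $i_m$ it first computes the forward LCE $\ell_m$ between $x[1\dd n]$ and $x[i_m\dd n]$; this value locates how far $u_m$ agrees with $u_1$ from the left and thereby fixes the only viable split point of the candidate rotation. A single backward LCE (in the suffix tree of $x^R$) between $u_1^R$ and the reversed tail $C^{right}_m=(u_m[\ell_m{+}1\dd k])^R$ then yields $\ell_m'$, and the block is accepted iff $\ell_m'\ge\ell_m$ --- intuitively, the portion of $u_m$ after the forward mismatch must wrap around and match a suffix of $u_1$ of at least that length. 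Two LCE queries per block give the $\cO(n/k)$ bound.

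By contrast, neither of your two proposed realisations actually achieves $\cO(1)$ per block. Precomputing minimal (Lyndon) rotations via Booth's algorithm costs $\cO(k)$ per block, hence $\cO(n)$ overall rather than $\cO(n/k)$; and ``aligning $u_i$ against the two halves of $u_1u_1$ at the candidate rotation offset'' presupposes that the offset is already known, which is the whole difficulty. The paper resolves this by letting the first (forward) LCE itself determine the split point, after which the second (backward) LCE verifies the wrap-around. So your outline is correct and matches the paper, but the concrete mechanism you left open is exactly what the paper's two-LCE scheme supplies.
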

\begin{proof}
We construct the suffix tree of $x$ (see~\cite{mccreight1976space},~\cite{ukkonen1992constructing},~\cite{weiner1973linear}).
We let $u=x[1\dots k]$, then let $\ell_m$ denote the depth of the lowest common ancestor of $x[1 \dd n]$ (see~\cite{Bender2000}), and $x[i_m \dd n]$. We compute the LCA $\ell_m$ of $x[1 \dd n]$ and $x[i_m \dd n]$ for $i_m=2k, 3k, \dots$, and $\ell_k=n-k$, if $\ell_m=1$ for some $m$, then $x$ is not $k$-cyclic periodic string. Now consider $C^{rignt}_m=(u_{\ell_m+1} \dots u_k)^R$, compute the $\ell_m^{'}$ the LCA of $u^R$ and $C^{right}_m$. If $\ell_m^{'} \geq \ell_m$ for all $m$, then $x$ is $k$-cyclic periodic. \qed
\end{proof}


\section{Computing all cyclic periodicities}

\begin{theorem}
Given a string $x$ of length $n$, test whether it is  $k$-cyclic periodicity for all $1 \leq k \leq n$, this can be determined in $\cO(n \log n)$ time and $\cO(n)$ space.
\end{theorem}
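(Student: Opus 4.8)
The plan is to amortize the single-$k$ test of the previous theorem over all choices of $k$ by exploiting the harmonic series. First I would build, once and for all, the data structures on which the previous theorem's algorithm relies: the suffix tree of $x$ together with a structure indexing reversed factors (for concreteness, the suffix tree of $x^R$), so that the lowest-common-ancestor queries both on forward suffixes $x[i_m \dd n]$ and on reversed factors such as $u^R$ and $C^{right}_m$ are supported, and an LCA / string-depth index over these trees answering each query in $\cO(1)$ time (see~\cite{Bender2000}). All of this is constructed in $\cO(n)$ time and $\cO(n)$ space and, crucially, does \emph{not} depend on $k$, so it is paid for only once.

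Next I would observe that $k$-cyclic periodicity requires $k \times \ell = n$, so only divisors of $n$ can succeed; for any other $k$ the test rejects immediately. For each candidate $k$ I would run the procedure of the previous theorem verbatim, reusing the precomputed trees and LCA index rather than rebuilding them. By that theorem the work for a fixed $k$ is $\cO(n/k)$: it issues one LCA query per block, there are $n/k$ blocks, and each query now costs $\cO(1)$ after the shared preprocessing.

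It remains to bound the total cost. Summing the per-$k$ work over all $1 \leq k \leq n$ gives
\[
\sum_{k=1}^{n} \cO\!\left(\frac{n}{k}\right) = \cO\!\left(n \sum_{k=1}^{n} \frac{1}{k}\right) = \cO(n \log n),
\]
using the harmonic bound $\sum_{k=1}^{n} 1/k = \cO(\log n)$; adding the $\cO(n)$ preprocessing does not change the order. The space never exceeds the $\cO(n)$ occupied by the suffix trees and the LCA index, since these are shared across all iterations rather than duplicated, and the only per-iteration storage is the $\cO(n/k)$ list of block boundaries, which is released between iterations.

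The main obstacle is organizational rather than combinatorial: one must ensure that the suffix-tree and LCA machinery is genuinely built only \emph{once} and shared across every value of $k$, so that the $\cO(n)$ construction is not charged per iteration (which would yield $\cO(n^2)$ and destroy the bound), and that every LCA query — including those on the reversed factors $C^{right}_m$ — truly runs in $\cO(1)$ time after the one-time preprocessing. Once this sharing is established, the $\cO(n \log n)$ time and $\cO(n)$ space bounds follow directly from the harmonic series. \qed
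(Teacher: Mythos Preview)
Your proposal is correct and follows essentially the same approach as the paper: build the suffix trees of $x$ and $x^R$ (with LCA support) once in $\cO(n)$, then invoke the single-$k$ test for every $k$ and sum $\sum_{k=1}^{n} n/k = \cO(n\log n)$ via the harmonic bound. Your additional remarks about restricting to divisors of $n$ and about shared space are sound refinements but not needed for the stated bound.
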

\begin{proof}
We apply the algorithm of Theorem 1 for $k=1,2,\dots,n$ and we test all cyclic periods of length $k$. The construction of the suffix tree of string $x$ and $x^R$ is done once costing $\cO(n)$. The total cost is $$\mathcal{O}(n)+\cO(\sum_{k=1}^{n} n/k)=\mathcal{O}(n\log n)$$. \qed
\end{proof} 
\begin{lemma}
Compute the cyclic period of $x$. 
\end{lemma}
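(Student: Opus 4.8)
The plan is to observe that ``the'' cyclic period of $x$ is naturally read as the smallest integer $k$ dividing $n$ for which $x$ is $k$-cyclic periodic, so that computing it reduces to repeatedly invoking the test of Theorem~1. First I would build the suffix trees of $x$ and of $x^R$ once, together with a constant-time LCA data structure, at a total cost of $\cO(n)$; these are precisely the structures the test of Theorem~1 requires, and they can be reused unchanged across all values of $k$. Since a $k$-cyclic period forces $k \cdot \ell = n$, the only candidates are the divisors of $n$, so the next step is to enumerate the divisors of $n$ in increasing order.

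For each divisor $k$, I would run the $\cO(n/k)$-time test of Theorem~1 to decide whether $x$ is $k$-cyclic periodic, and return the first (hence smallest) divisor that passes; if no proper divisor passes, the cyclic period is $n$ itself. Correctness of this early stopping is immediate from scanning the divisors in increasing order, since the first divisor that passes is by definition the minimal cyclic-period length and coincides with the value $A[n]$ of the cyclic periodic array.

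For the running time, summing the per-divisor cost gives $\sum_{k \mid n} \cO(n/k) = \cO\bigl(\sum_{d \mid n} d\bigr) = \cO(\sigma(n))$, which is $\cO(n \log\log n)$ in the worst case; alternatively, testing every $k$ from $1$ to $n$ exactly as in Theorem~2 yields the looser but cleaner bound $\cO(n \log n)$. Either way the computation stays within the budget already established for the all-periodicities problem, and the working space remains $\cO(n)$.

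The step I expect to be the main obstacle is justifying that each divisor genuinely must be tested rather than inferred from a smaller one: being $k$-cyclic periodic does \emph{not} in general imply being $k'$-cyclic periodic for a multiple $k'$ of $k$, because grouping consecutive rotation-equivalent blocks of length $k$ need not yield rotation-equivalent blocks of length $k'$. I would therefore argue this failure of monotonicity explicitly (ideally with a short witness string) in order to rule out any shortcut over the divisor enumeration and to confirm that the increasing-order scan with early termination really does return the minimal cyclic period.
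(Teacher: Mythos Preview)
Your proposal is correct but considerably more elaborate than the paper's own treatment. The paper's proof is a single sentence: ``The smallest cyclic-period of $x$ is the cyclic-period of $x$,'' which simply identifies the quantity as the minimum over all $k$ and implicitly relies on Theorem~2 (which already computes all $k$-cyclic periodicities in $\cO(n\log n)$) to supply it. You instead restrict attention to divisors of $n$, scan them in increasing order, and stop at the first success, yielding the sharper bound $\cO(\sigma(n))=\cO(n\log\log n)$. What your approach buys is a tighter running time and an explicit standalone algorithm; what the paper's approach buys is brevity, since once Theorem~2 is in hand the minimum falls out for free. One minor remark: your final paragraph on the failure of monotonicity is not needed for correctness---since you scan in increasing order and return on the first success, no property of larger divisors is ever assumed, so there is no shortcut to justify or rule out.
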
.
\begin{proof}
The smallest cyclic-period of $x$ is the cyclic-period of $x$. \qed
\end{proof}
\section{Computing maximal local $k$-cyclic periodicity}

\begin{theorem}
We can compute all $k$-cyclic periodicity of $x$ in $\cO(n \log n)$ time.
\end{theorem}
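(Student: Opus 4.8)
The plan is to characterise a maximal local $k$-cyclic periodic factor as a maximal substring $x[i \dd j]$ that splits into consecutive length-$k$ blocks $x[i \dd i+k-1], x[i+k \dd i+2k-1], \dots$ all of which are cyclic rotations of one another, and which cannot be extended to the left or right. The first observation I would use is that ``being a cyclic rotation'' is an equivalence relation, so it suffices to test adjacent blocks: a run of blocks is mutually rotation-equivalent precisely when every consecutive pair is. I would reuse exactly the preprocessing of Theorems~1 and~2, building the suffix trees of $x$ and $x^R$ once in $\cO(n)$ time and equipping them with constant-time LCA / longest-common-extension queries (see~\cite{Bender2000}); this cost is shared and does not recur per block.

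For a fixed $k$ I would, for each offset $r \in \{0,1,\dots,k-1\}$, walk along the arithmetic progression of block starts $p = r+1,\, r+1+k,\, r+1+2k, \dots$ and decide for each $p$ whether $x[p \dd p+k-1]$ and $x[p+k \dd p+2k-1]$ are cyclic rotations, using the two-query primitive from the proof of Theorem~1: one LCE query on $x$ to locate the first mismatch at some depth $\ell$, followed by one LCE query on $x^R$ comparing the reversed tail with $u^R$ to confirm that the mismatching suffix wraps around correctly. Recording a ``link'' at every such $p$ reduces the task to extracting maximal runs of consecutive links along each progression, which is a single linear scan. Since the offset classes partition the positions $1,\dots,n$, only $\cO(n)$ adjacent pairs are examined for a given $k$, each in $\cO(1)$ time, comfortably inside the $\cO(n\log n)$ budget; endpoint maximality is handled by checking that the block immediately before the run's start and the block immediately after its end break rotation-equivalence (equivalently, that no single-letter extension $\beta y$ or $y\alpha$ remains cyclic periodic, matching the maximality condition in the definition above).

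I expect the main obstacle to be the rotation-equivalence primitive itself. In Theorem~1 every block is compared against the \emph{fixed} first block $u$, so the relevant rotation amount is pinned down by the position of the first mismatch; here the reference block changes along the run, and one must argue that a constant number of LCE queries still certifies that $x[p+k \dd p+2k-1]$ equals \emph{some} rotation $c_\delta(x[p \dd p+k-1])$ without trying all $k$ candidate shifts $\delta$. The cleanest way to sidestep this, and to account transparently for the stated bound, is to precompute for the given $k$ the canonical (least) rotation of every length-$k$ block and attach it as an integer token: two blocks are then rotation-equivalent iff their tokens coincide, turning each adjacency test into an $\cO(1)$ comparison and reducing run extraction to scanning equal-token stretches, with the precomputation absorbing the logarithmic factor. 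The remaining care is purely bookkeeping: defining maximality correctly at the two ends and ensuring each maximal factor is reported exactly once, since it is discovered entirely within a single offset class.
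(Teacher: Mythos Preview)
Your proposal diverges from the paper's argument in a way that breaks the claimed bound. The paper's proof obtains $\cO(n\log n)$ by iterating over \emph{all} $k=1,\dots,n$ and, for each $k$, reusing the Theorem~1 procedure that only inspects the $\cO(n/k)$ blocks aligned with position~$1$ (positions $1,\,k{+}1,\,2k{+}1,\dots$), then checks one-letter extensions $y\alpha$ and $\beta y$; the total is the harmonic sum $\sum_{k=1}^{n} \cO(n/k)=\cO(n\log n)$. Your plan instead processes, for each fixed $k$, \emph{every} offset class $r\in\{0,\dots,k-1\}$, which you correctly count as $\cO(n)$ adjacent pairs. But that is $\cO(n)$ work \emph{per $k$}, so summed over all $k$ you get $\cO(n^2)$, not $\cO(n\log n)$; the phrase ``comfortably inside the $\cO(n\log n)$ budget'' is only true for a single $k$, not for the whole computation the theorem (as proved in the paper) covers.

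There is a second gap in the fallback you propose. Precomputing the canonical (least) rotation of every length-$k$ block is not an $\cO(1)$-per-block operation after $\cO(n)$ preprocessing: Booth's algorithm (or Duval's) takes $\Theta(k)$ per block, so tokenising all $\Theta(n)$ blocks of a fixed length $k$ costs $\Theta(nk)$, and you have not argued any amortisation across blocks or across values of $k$. Saying the precomputation ``absorbs the logarithmic factor'' does not match any step you have specified. Likewise, the constant-time rotation-equivalence test via two LCE queries that you inherit from Theorem~1 only works because there the reference block $u=x[1\dd k]$ is fixed and the first mismatch pins the unique candidate shift; once the reference block varies along a run, two LCE queries need not determine whether \emph{some} rotation matches, and you have not supplied the missing argument. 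In short: either restrict to the single aligned offset as the paper does (recovering $\cO(n/k)$ per $k$), or give a genuine $\cO(n/k)$- or $\cO(n)$-total tokenisation argument; as written, neither branch of your plan reaches $\cO(n\log n)$.
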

\begin{proof}
We apply the algorithm for $k=1,2,\dots,n$ and in this case, extend it to cyclic periods of length $k+1$, where $|y|$ = $m$ is cyclic periodic and $y\alpha$ = $m+1$ is not cyclic periodic. Next, we perform this algorithm on string $x^R$ as $\mathcal{T}(x^R)$, where $|y|$ = $m$, again is cyclic periodic and $\beta y$ = $m+1$ is not cyclic periodic. 

The construction of the suffix tree of string $x$ is done once. The total cost is 
$$\cO(\sum_{k=1}^{n} n/k)=\cO(n\log n)$$. \qed
\end{proof}
\begin{lemma}
Compute maximal local $k$-cyclic periodicity of $x$. 
\end{lemma}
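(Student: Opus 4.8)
The plan is to reuse the machinery of Theorem~3 and to characterise maximal local $k$-cyclic periodicity by \emph{two-sided non-extendability}. By the definition, a factor $y=x[i \dd j]$ is maximal local $k$-cyclic periodic precisely when $y$ is $k$-cyclic periodic while neither the right extension $y\alpha$ nor the left extension $\beta y$ stays cyclic periodic, where $\alpha,\beta \in \Sigma$ are the letters of $x$ immediately following and preceding $y$. I would therefore compute, for every $1 \leq k \leq n$, the set of $k$-cyclic periodic factors that cannot be grown in either direction, and report each such factor as a maximal local witness.

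First I would build the suffix tree of $x$ and of $x^R$ once, together with the constant-time LCA structure of~\cite{Bender2000}, exactly as in the proof of Theorem~1. For a fixed $k$ I would slide block boundaries at multiples of $k$ and, using the LCA (longest common prefix) queries between $u^R$ and the reversed tails $C^{right}_m$ from Theorem~1, detect the rightmost position $j$ at which the rotation condition $u_i=c(u_j)$ first fails; this pins down the right end of every maximal run of consecutive equal-length rotation blocks. Symmetrically, running the same test on $\mathcal{T}(x^R)$ pins down the left end, i.e.\ the smallest $i$ for which $\beta y$ is no longer cyclic periodic. A factor is a maximal local $k$-cyclic period exactly when its endpoints coincide with such a right breakpoint and a left breakpoint, so that it survives both scans and is contained in no longer cyclic periodic factor.

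As in Theorems~2 and~3, for a fixed $k$ the scan touches only $\cO(n/k)$ block boundaries and issues $\cO(n/k)$ LCA queries, each answered in constant time, so summing over $k=1,2,\dots,n$ yields a total of $\cO(\sum_{k=1}^{n} n/k)=\cO(n\log n)$ time and $\cO(n)$ space. The main obstacle, and the point I would treat with the most care, is that Theorems~1--3 are phrased for \emph{prefixes} $x[1 \dd i]$, whereas maximality forces us to examine factors $x[i \dd j]$ that start at arbitrary positions and are anchored to block grids that need not begin at position~$1$. Handling these shifted alignments without re-running the prefix test from scratch at every offset --- and making the left- and right-extension tests agree on a common block grid --- is the step that must be argued carefully in order to keep the bound at $\cO(n\log n)$ rather than $\cO(n^2)$.
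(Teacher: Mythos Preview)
Your proposal follows essentially the same approach as the paper: run the right-extension test of Theorem~3 on $x$ to find where $y\alpha$ stops being cyclic periodic, run the symmetric left-extension test via $\mathcal{T}(x^R)$ to find where $\beta y$ fails, and then merge the two arrays to output the maximal local witnesses. The paper's own proof is a single sentence (``compute and merge the arrays for $y\alpha$ and $\beta y$''), so your write-up is in fact more detailed, and the obstacle you flag about arbitrary starting positions versus prefix-anchored blocks is a genuine gap that the paper also leaves unaddressed.
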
.
\begin{proof}
We compute and merge the arrays for $y\alpha$ and $\beta y$ of $x$. That is the maximal local $k$-cyclic periodicity of $x$. \qed
\end{proof}

\section{Computing $k'$-cyclic coverability}

\begin{theorem}
Given a string $x$ of length $n$ and an integer $k'$, $1 \leq k' \leq n$, test whether it is $k'$-cyclic coverable, this can be determined in $\cO(n)$ time and $\cO(n)$ space.
\end{theorem}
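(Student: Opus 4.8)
The plan is to reduce $k'$-cyclic coverability to a circular pattern matching problem followed by a linear interval-coverage test. First I would observe that position $1$ of $x$ can only be covered by an occurrence $x[1\dd k']$ that equals some rotation $c(u)$; hence any candidate $u$ must be a conjugate (cyclic rotation) of the prefix $x[1\dd k']$, and the family of rotations $\{c(u)\}$ does not depend on which representative of the conjugacy class we pick. Therefore it suffices to fix $u := x[1\dd k']$ and decide whether every position of $x$ lies inside an occurrence of some rotation of $u$.

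The central subtask is to locate all positions $i$ such that $x[i\dd i+k'-1]$ is a rotation of $u$. I would use the standard fact that the rotations of $u$ are exactly the length-$k'$ factors of the doubled string $uu$. Building the suffix tree (equivalently, the suffix automaton) of $uu$ in $\cO(k')$ time, I would then scan $x$ from left to right, maintaining at each position $i$ the length $L_i$ of the longest suffix of $x[1\dd i]$ that is a factor of $uu$, i.e.\ the matching statistics of $x$ against $uu$. Using suffix links, this scan runs in $\cO(n)$ total time. A position $i$ is the right endpoint of an occurrence of a rotation of $u$ exactly when $L_i \geq k'$, in which case the window $[\,i-k'+1,\,i\,]$ is covered; note that $L_{k'} \geq k'$ always holds since $x[1\dd k']=u$, so position $1$ is automatically covered, matching the reduction above.

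Finally I would run a linear sweep over the set $R=\{i : L_i \geq k'\}$ of right endpoints, each contributing the length-$k'$ interval $[\,i-k'+1,\,i\,]$. Since all these intervals share the same length, $x$ is $k'$-cyclic coverable if and only if the largest right endpoint equals $n$ and consecutive elements of $R$ (read in increasing order) differ by at most $k'$; both conditions are tested in $\cO(n)$ time during the sweep. Summing the three phases yields $\cO(n)$ time and $\cO(n)$ space, as claimed.

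The main obstacle is the middle phase: naively testing each of the $k'$ rotations separately would cost $\Theta(k'^2)$, hence up to $\Theta(n^2)$, and would break the linear bound, so the argument hinges on packaging all rotations simultaneously as the length-$k'$ factors of $uu$ and extracting their occurrences in one amortised-linear matching-statistics pass. Care is also needed to verify that every length-$k'$ factor of $uu$ is genuinely a rotation of $u$ (which holds because $|uu|=2k'$) and that thresholding $L_i$ at $k'$ recovers \emph{exactly} the rotation occurrences, so that the coverage test is both sound and complete.
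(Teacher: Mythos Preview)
Your argument is correct and meets the claimed $\cO(n)$ time and space bounds, but it follows a genuinely different route from the paper. The paper builds the suffix trees of $x$ and of $x^R$, preprocesses them for constant-time lowest-common-ancestor (equivalently, longest-common-extension) queries, and then, for each of the $n-k'+1$ length-$k'$ windows $x[2\dd k'+1],\ldots,x[n-k'+1\dd n]$, decides in $\cO(1)$ time whether that window is a cyclic rotation of $u=x[1\dd k']$ by combining one forward and one reverse LCE query, in the same spirit as its proof of Theorem~1; the windows that pass are then assembled into a cover. You instead encode all rotations of $u$ simultaneously as the length-$k'$ factors of $uu$, build a single index for $uu$, and recover all rotation occurrences in $x$ via one amortised-linear matching-statistics scan, followed by the gap test on consecutive right endpoints. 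Both approaches identify exactly the same occurrence set and finish with the same linear coverage check; the paper's variant has the advantage that its forward/reverse suffix-tree machinery is already in place from the earlier sections, while your variant is the textbook circular-pattern-matching reduction, is more self-contained, and dispenses with the reverse suffix tree altogether.
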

\begin{proof}
We compute the suffix tree of string $x$ as $\mathcal{T}(x)$, and also we compute the suffix tree of string $x^R$ as $\mathcal{T}(x^R)$.

Then we check $x[1,k']$ with each one of $x[n-k'+1,n]$, $x[n-k',n-1]$, $x[n-k'-1,n-2]$ \dots $x[2,k'+1]$, together with the reverse pairs in $T(x^R)$. This way we build a collection of cyclic covers if there is one.




The construction of the suffix tree costs $\cO(n)$; checking of equality costs $\cO(1)$ and there are $n$ factors. The total time is $\cO(n)$. \qed
\end{proof}


\section{Computing all cyclic coverability}

\begin{theorem}
Given a string $x$ of length $n$, test whether it is  $k'$-cyclic coverable for $1 \leq k' \leq n$, this can be determined in $\cO(n^2)$ time and $\cO(n)$ space.
\end{theorem}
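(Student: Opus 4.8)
The plan is to reduce the all-lengths problem to repeated application of the single-length test established in the preceding theorem. First I would construct, once and for all, the suffix trees $\mathcal{T}(x)$ and $\mathcal{T}(x^R)$ together with the constant-time lowest-common-ancestor preprocessing they require; by the cited references this costs $\cO(n)$ time and $\cO(n)$ space, and, crucially, these data structures do not depend on $k'$, so they can be reused across all iterations without being rebuilt.

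Next I would loop over every candidate cover length $k' = 1, 2, \ldots, n$ and, for each value, run the $k'$-cyclic coverability test of the previous theorem. That test compares the prefix $x[1 \dd k']$ against each of the $n$ sliding windows $x[n-k'+1 \dd n], x[n-k' \dd n-1], \ldots, x[2 \dd k'+1]$, and with the suffix trees of $x$ and $x^R$ already available each window comparison reduces to $\cO(1)$ equality and rotation queries; hence a single value of $k'$ is handled in $\cO(n)$ time. Summing over all lengths gives a total running time of
$$\cO(n) + \sum_{k'=1}^{n} \cO(n) = \cO(n^2).$$
Since the suffix trees are built only once and each iteration overwrites the same $\cO(n)$ block of working memory, the overall space stays at $\cO(n)$.

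The main point to verify is that the per-length cost really is $\cO(n)$ and does not silently re-incur the tree construction inside the loop; as long as the $\cO(n)$ construction is factored out, the remaining work for each $k'$ is purely query-driven. It is also worth emphasising why the bound here is $\cO(n^2)$ rather than the $\cO(n \log n)$ obtained earlier for cyclic periodicity: for periods the number of blocks to inspect shrinks to $\cO(n/k)$, yielding a harmonic sum, whereas cyclic coverability must examine up to $n$ overlapping windows for every $k'$, so the per-length costs do not telescope and the quadratic bound is the natural outcome.
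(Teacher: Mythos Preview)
Your proposal is correct and follows essentially the same approach as the paper: build the suffix-tree machinery once, then iterate the single-length $k'$-cyclic coverability test of the preceding theorem over all $k'=1,\ldots,n$, summing $\cO(n)$ per length to obtain $\cO(n^2)$ overall with $\cO(n)$ space. Your write-up is in fact more careful than the paper's, explicitly noting why the construction is factored out of the loop and why the harmonic-sum speedup from the periodicity case does not apply here.
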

\begin{proof}
We apply the algorithm for $k'=1,2,\dots,n$ and we compare all cyclic coverable of length $k'$. The construction of the suffix tree of string $x$ is done once. The total cost is 
$$\cO(\sum_{k'=1}^{n} n)=\cO(n^2)$$. \qed
\end{proof}
\begin{lemma}
Compute the cyclic coverability of $x$. 
\end{lemma}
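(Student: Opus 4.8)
The plan is to obtain the full cyclic coverability of $x$ as an immediate consequence of the preceding theorem, which tests $k'$-cyclic coverability for a single fixed length. First I would invoke that single-length test for every candidate length $k' = 1, 2, \dots, n$ in turn. Each individual test costs $\cO(n)$ time and reuses the suffix trees $\mathcal{T}(x)$ and $\mathcal{T}(x^R)$, which are built only once in $\cO(n)$ time and occupy $\cO(n)$ space; hence the batch of $n$ tests costs $\cO(n^2)$ time overall while the space stays $\cO(n)$. Whenever the test for a given $k'$ reports success, I would record the witnessing length-$k'$ factor, and collecting these witnesses over all $k'$ that pass the test yields exactly the set of all cyclic covers of $x$. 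On the running example $x = ababbaba$ this procedure flags the lengths $2,4,5,7$ and reports $ab$, $abab$, $ababb$, $ababbab$ accordingly.

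Because cyclic covers are defined up to rotation, the natural representative to store for each successful $k'$ is the rotation class $c(x[1 \dd k'])$, and I would keep one such representative per successful length. The \textbf{main obstacle} will be guaranteeing that the representative reported for each successful length truly witnesses the cover at every position: the test certifies coverability only up to rotation, so at different positions $i$ the factor $x[\beta \dd \gamma]$ that covers $i$ may be a different rotation $c_{\delta}(u)$ of the same underlying string $u$. I would address this by having the $k'$-test, when it succeeds, expose the common rotation class it has verified---the equality checks it performs against $x[1 \dd k']$ in $\mathcal{T}(x)$ and $\mathcal{T}(x^R)$ already pin down this class---and then output that class as the canonical cover. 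A minor secondary point, easily checked, is that covers of different lengths are automatically distinct, so no cross-length deduplication is needed and the per-length bookkeeping is $\cO(1)$.

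Since the dominant cost is the $\cO(n^2)$ batch of single-length tests and all additional bookkeeping is linear, the complete cyclic coverability of $x$ is computed in $\cO(n^2)$ time and $\cO(n)$ space, matching the bound of the preceding theorem. The only part requiring genuine care is the rotation-class bookkeeping described above; everything else reduces to a straightforward loop over the already-established test.
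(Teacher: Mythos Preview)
Your argument is correct, but it essentially re-derives the theorem that already precedes this lemma in the paper: that theorem (not the single-length test you cite, but the one immediately after it) already loops the $k'$-test over all $k'=1,\dots,n$ and obtains the full list of successful lengths in $\cO(n^2)$ time and $\cO(n)$ space. Your proposal, including the rotation-class bookkeeping, is a more careful restatement of that same loop.

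The paper's own proof of the lemma is a one-line corollary of that preceding theorem rather than a fresh algorithm: it simply observes that the smallest length among those flagged by the $\cO(n^2)$ batch test is \emph{the} cyclic coverability of $x$ (mirroring the analogous lemma for cyclic periods, where ``the'' cyclic period is the smallest one). So where you compute the entire set of cyclic covers and report one representative per length, the paper just picks the minimum from the already-computed set. Your version is more explicit and arguably more informative---it actually delivers all covers with witnesses---but it does more work than the lemma asks for; the paper's version costs nothing beyond a single scan for the minimum once the preceding theorem has run.
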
.
\begin{proof}
The smallest cyclic coverable of $x$ is the all the cyclic coveralbe of $x$. \qed
\end{proof}









\section{Conclusions and open problems}
In this paper, we defined $k$-cyclic periodicity,
we presented several efficient algorithms for computing: (i) cyclic periodicity; (ii) all cyclic periodicity; (iii) maximal local cyclic periodicity; (iv) cyclic covers.

Future work will be focused on computing the cyclic-periodic array, that is the cyclic periodicity of every prefix of string $x$ and  computing the cyclic-coverability array, that is testing each prefix of $x$, for cyclic-coverability. Finally, we will extend the cyclic periodicity to cover the case $u_1u_2u_2\dots u_k u^1$, where $u_i$=$c(u_j)$ $\forall$ $i,j$ and $u^1$ is a substring of some $u_i$.

%
%
%
%
\bibliographystyle{splncs03}
\bibliography{main.bib}

\end{document}